\newtheorem{theo}{Theorem}[section]
\newtheorem{prop}[theo]{Proposition}
\newtheorem{lemm}[theo]{Lemma}
\newtheorem{defi}[theo]{Definition}
\newcommand{\lb}{\!:\!}
\newcommand{\var}{\bullet}
\newcommand{\bN}{\mathbb{N}} 
\newcommand{\Set}{\mathbf{Set}}
\newcommand{\Gr}{\mathbf{Gr}}
\newcommand{\catC}{\mathbf{C}}
\newcommand{\cN}{\mathcal{N}} % set of nodes 
\newcommand{\cD}{\mathcal{D}} % set of labeled nodes (domain)
\newcommand{\cL}{\mathcal{L}} % labeling function  % changer : fonte ``function'' ? 
\newcommand{\cS}{\mathcal{S}}  % successor function % changer : fonte ``function'' ? 
\newcommand{\ari}{\mathrm{ar}}    % arity function
\newcommand{\setH}{\mathcal{H}} % set of nodes of H 
\newcommand{\parto}{\rightharpoonup}  % partial function 
\newcommand{\catH}{\mathbf{H}}
\newcommand{\HTm}{\catH_{T,m}}  % category of heterogeneous cones 
\newcommand{\CTm}{\catC_{T,m}}  % category of cloning cones 
\newcommand{\tin}{\widetilde{n}} 
\newcommand{\dom}{\mathrm{Dom}} 
\newcommand{\sqgr}{
 \xymatrix@C=6pc{
  L \ar@{-->}[r]_{\tau} \ar[d]_{m} & 
      R \ar[d]^{d} \ar@{--_{>}}@/_3ex/@<-1ex>[l]_{\sigma} \\
  G \ar@{-->}[r]_{\tau_1} & 
      H  \\ }
 }
\newcommand{\sqset}{
 \xymatrix@C=6pc{
  |L| \ar[r]_{\tau} \ar[d]_{|m|} & 
      |R| \ar[d]^{|d|} \ar@{-_{>}}@/_3ex/@<-1ex>[l]_{\sigma} \\
  |G| \ar[r]_{\tau_1} & 
      |H| \ar@{}[ul]|(.2){PO}  \\ }
 }
\title{A Cloning Pushout Approach to Term-Graph Transformation} %% titre !?
\author{D. Duval\inst{1} \and R. Echahed\inst{2} \and F. Prost\inst{2}}
\institute{ Laboratoire LJK\\
            B. P. 53,  F-38041 Grenoble, France \\
           \email{Dominique.Duval@imag.fr} \and
            Laboratoire LIG\\
            46, av F\'elix Viallet,
            F-38031 Grenoble, France\\
            \email{Rachid.Echahed@imag.fr}/
            \email{Frederic.Prost@imag.fr}}
\date{October 7., 2008}
\newcommand{\oubli}[1]{} % à supprimer
\begin{document}

%%%%%%%%%%%%%%%%%%%%%%%%%%%%%%%%%%%%%%%%%%%%%%%
%%%%%%%%%%%%%%%%%%%%%%%%%%%%%%%%%%%%%%%%%%%%%%%
\maketitle

%%%%%%%%%%%%%%%%%%%%%%%%%%%%%%%%%%%%%%%%%%%%%%%
% abstract 
\begin{abstract} 
We address the problem of cyclic termgraph rewriting. We propose a new
framework where rewrite rules are tuples of the form $(L,R,\tau,
\sigma)$ such that $L$ and $R$ are termgraphs representing the
left-hand and the right-hand sides of the rule, $\tau$ is a mapping
from the nodes of $L$ to those of $R$ and $\sigma$ is a partial
function from nodes of $R$ to nodes of $L$. $\tau$ describes how
incident edges of the nodes in $L$ are connected in $R$.  $\tau$ is
not required to be a graph morphism as in classical algebraic
approaches of graph transformation.  The role of $\sigma$ is to
indicate the parts of $L$ to be cloned (copied). Furthermore, we introduce a
new notion of \emph{cloning pushout} and define rewrite steps as
cloning pushouts in a given category.  Among the features of the
proposed rewrite systems, we quote the ability to perform local and
global redirection of pointers, addition and deletion of nodes as well
as cloning and collapsing substructures.
\end{abstract}
%%%%%%%%%%%%%%%%%%%%%%%%%%%%%%%%%%%%%%%%%%%%%%%
%%%%%%%%%%%%%%%%%%%%%%%%%%%%%%%%%%%%%%%%%%%%%%%
\section{Introduction}

Complex data-structures built by means of records
and pointers, can formally be represented by \emph{termgraphs}
\cite{BVG87,SPV93,Plu98a}. Roughly speaking, a termgraph is a first-order term with
possible sharing and cycles. The unravelling of a termgraph is a
rational term. Termgraph rewrite systems constitute a high-level
framework which allows one to describe, at a very abstract level,
algorithms over data-structures with pointers. Thus avoiding, on the
one hand, the cumbersome encodings which are needed to translate
graphs (data-structures) into trees in the case of programing with first-order term rewrite systems
and, on the other hand, the many classical errors which may occur in
imperative languages when programing with pointers.

Transforming a termgraph is not an easy task in general.  Many
different approaches have been proposed in the literature which tackle
the problem of termgraph transformation. The algorithmic approach such
as \cite{BVG87} defines in details every step involved in the
transformation of a term-graph by providing the corresponding
algorithm. This approach is too close to implementation
techniques.  In \cite{ArK96}, equational definition
of term-graphs are exploited to define termgraph transformation. These
transformations are obtained up to bisimilar structures (two
termgraphs are bisimilar if they represent the same rational
term). Unfortunately, bisimilarity is not a congruence in general
(e.g., the lengths of two bisimilar but different circular lists are
not bisimilar).
% therefore the equational representation of termgraphs
% cannot be a candidate to achieve our purposes.

A more abstract approach to graph transformation is the
algebraic one, first proposed in the seminal paper
\cite{EhrigPS73}. It defines a rewrite step using the notion of
pushouts. The algebraic approach is quite declarative.  The details of
graph transformations are hidden thanks to pushout constructs. There
are mainly two different algebraic approaches, namely the double
pushout (DPO) and the single pushout (SPO) approaches.

In the DPO approach \cite{EhrigPS73,CorradiniMREHL97}, a rule is defined as a pair of graph morphisms
$L\leftarrow K \rightarrow R$ where $L$, $K$ and $R$ are graphs and
the arrows represent graph homomorphisms.  A graph G rewrites into a
graph H, iff there exists a homomorphism (a matching) $m : L \rightarrow G$ and
a graph D such that the left and the right
squares of the diagram of Fig.\ref{doublepushout} are pushouts.
\begin{figure*}[ht]
\begin{minipage}[b]{0.5\linewidth}
\centering
$$ \xymatrix@C=4pc{
  L \ar[d]_{m} & K \ar[l]_{l} \ar[d]|{d}  \ar[r]^{r} & R \ar[d]^{m'} \\ 
  G & D \ar[l]_{l'} \ar[r]^{r'} & H \\ 
}$$
  \caption{Double pushout: a rewrite step}
  \label{doublepushout}
\end{minipage}
\hspace{0.5cm}
\begin{minipage}[b]{0.5\linewidth}
\centering
$$ \xymatrix@C=4pc{
  L \ar[d]_{m} \ar[r]_{l}& R  \ar[d]^{m'}  \\ 
  G  \ar[r]_{l'} & H  \\ 
}$$
  \caption{Single pushout: a rewrite step}
  \label{spo}
\end{minipage}
\end{figure*}

In general, D is not unique.  Sufficient conditions may be given such
as dangling and identification conditions in order to ensure existence
of pushout complement. The DPO approach is easy to grasp since
morphisms are supposed to be completely defined. However, this
approach fails, in general, to specify rules with deletion of nodes.
For example, if we consider the rule $f(x) \to f(b)$ which can be
translated into the span $f(x) \leftarrow K_0 \rightarrow f(b)$ for some graph $K_0$, and
apply that rule on $f(a)$, then because of pushout properties $f(a)$
is rewritten into a termgraph $H$ which contains $a$. However, $f(b)$ is the only desired result for H.

In the SPO approach \cite{Rao84,Kennaway87,Lowe93,EhrigHKLRWC97}, a
rule is a \emph{partial} graph morphism $L \to R$. When a (total)
graph morphism $m : L \rightarrow G$ exists, $G$ can rewrite to a graph
$H$ iff the  square of Fig~\ref{spo} is a pushout.
%
%\begin{figure*}[ht]
%$$ \xymatrix@C=4pc{
%  L \ar[d]_{m} \ar[r]_{l}& R  \ar[d]^{m'}  \\ 
%  G  \ar[r]_{l'} & H  \\ 
%}$$
%  \caption{Single pushout: a rewrite step}
%  \label{spo}
%\end{figure*}
%
This approach is appropriate to specify deletion of nodes thanks to partial homomorphisms. However, in the case of termgraphs, some care should be taken when a node is deleted. Indeed, deletion of a node causes automatically the deletion of its incident edges. This is not sound in the case of termgraphs since each function symbol should have as many successors as its arity.

%%%%%%%%%%%%%%%%%%%%%%%%%%%%%%
% our objective
%%%%%%%%%%%%%%%%%%%%%%%%%%%%%

In this paper, we investigate a new approach to the definition of
rewrite relations over cyclic termgraphs. We are interested in rewrite
relations, $R$, over termgraphs such that $(t,t')$ belongs to $R$, iff $t'$ is
obtained from t by performing a series of actions of the six following
kinds :(i) addition of new nodes, (ii) redirection of particular
edges, (iii) redirection of all incident edges of a particular node
(iv) deletion of nodes (v) cloning of nodes and (vi) collapsing of
nodes.  In order to deal with these features in a single framework, we
propose  a new algebraic approach to define such rewrite
relations. Our approach departs from the SPO and the DPO approaches. A
rewrite rule is defined as a tuple $(L,R,\tau,\sigma)$ such that, $L$ and $R$ are
termgraphs, respectively the left-hand side and the right-hand side of
the rule. $\tau$ is a mapping from the nodes of $L$ into the nodes of $R$ ($\tau$
has not to be a graph morphism). $\tau(n)= n'$ indicates that incident
edges of $n$ are to be redirected towards $n'$.  $\sigma$ is a partial function
from unlabeled nodes of $R$ into nodes of $L$.  Roughly speaking, $\sigma(n) = p$
indicates that node $n$ should be instantiated as $p$ (parameter
passing). We show that whenever a matching $m:L \to G$ exists, then the
termgraph $G$ rewrites into a termgraph $H$.  We define the termgraph $H$ as an initial
object of a given category. The construction of $H$ could be seen as a
generalization of that of pushouts. We call it \emph{cloning pushout}.

%%%%%%%%%%%%%%%%%%%%%%%%%%%%%%%%
% presentation of the different sections
%%%%%%%%%%%%%%%%%%%%%%%%%%%%%%% 
The paper is organized as follows. In the next section we introduce
the basic definitions of graphs and morphisms that we consider in the
paper. In section~\ref{sec:het}, we introduce a first simplified
version of our rewriting approach. This first step prevents from the
cloning of substructures. Then, in section~\ref{sec:clo}, we give the
full definition of rewriting, including cloning possibility, and
illustrate our approach through several examples in section~\ref{sec:examples}.
Concluding remarks are given in section~\ref{sec:conc}.

%***

%%%%%%%%%%%%%%%%%%%%%%%%%%%%%%%%%%%%%%%%%%%%%%%

% ``The'' diagram (on $\Gr$ then on $\Set$): 
%$$ \sqgr \qquad\qquad \sqset $$

%%%%%%%%%%%%%%%%%%%%%%%%%%%%%%%%%%%%%%%%%%%%%%%
%%%%%%%%%%%%%%%%%%%%%%%%%%%%%%%%%%%%%%%%%%%%%%%
\section{Graphs}
\label{sec:graph}

%[DEP1,DEP2,CG]
In this section we give some technical definitions that we use in the paper. 
We assume the reader is familiar with category theory. The missing definitions may be consulted in \cite{maclane}.

Throughout this paper, a signature $\Omega$ is fixed.
Each operation symbol $\omega\in\Omega$ is endowed with an \emph{arity}
$\ari(\omega)\in\bN$.
For each set $X$, the set of strings over $X$ is denoted $X^*$, and for each
function $f: X \to Y$, the function $f^* : X^* \to Y^*$ is defined by 
$f^*(x_1 \ldots x_n) =f(x_1) \ldots f(x_n)$.

\begin{defi}[Graph]
\label{defi:graph-graph}
A \emph{termgraph}, or simply a \emph{graph} $G=(\cN,\cD,\cL,\cS)$
is made of a set of \emph{nodes} $\cN$ 
and a subset of \emph{labeled nodes} $\cD\subseteq\cN$,  
which is the domain for 
a \emph{labeling function} $\cL:\cD\to\Omega$
and a \emph{successor function} $\cS:\cD\to\cN^*$,
such that for each labeled node $n$, the length of the string $\cS(n)$ 
is the arity of the operation $\cL(n)$. 
For each labeled node $n$ the fact that $\omega=\cL(n)$ is written $n\lb\omega$, 
and each unlabeled node $n$ may be written as $n\lb\var$,
so that the symbol $\var$ is a kind of anonymous variable.

A \emph{graph homomorphism}, or simply a \emph{graph morphism} $g:G\to H$,
where $G=(\cN_G,\cD_G,\cL_G,\cS_G)$ and $H=(\cN_H,\cD_H,\cL_H,\cS_H)$
are graphs, is a function $g:\cN_G\to\cN_H$ 
which preserves the labeled nodes 
and the labeling and successor functions.
This means that $g(\cD_G)\subseteq\cD_H$,
and for each labeled node $n$, $\cL_H(g(n))=\cL_G(n)$ and $\cS_H(g(n))=g^*(\cS_G(n))$
(the image of an unlabeled node may be any node).
This yields the category $\Gr$ of graphs.
\end{defi}

We denote by $\Set$ the classical category of sets.

\begin{defi}[Node functor]
\label{defi:graph-node}
The \emph{node functor} $|-|:\Gr\to\Set$
maps each graph $G=(\cN,\cD,\cL,\cS)$ to its set of nodes $|G|=\cN$ 
and each graph morphism $g:G\to H$ to its underlying function $|g|:|G|\to|H|$.
\end{defi}

We may denote $g$ instead of $|g|$ since the node functor is faithful,
%\cite{DuvalEP07} % [DEP], 
which means that a graph morphism is determined by its underlying
function on nodes.  The faithfulness of the node functor implies that
a diagram of graphs is commutative if and only if its image by the
node functor is commutative, as a diagram of sets.  It may be noted
 % bien que ça ne serve pas 
that the node functor preserves pullbacks,
because it has a left adjoint,
%\cite{CorradiniG99} % [CG], 
and that it does not preserve pushouts.
% for example...
% ex : (n->p) 2 fois, au-dessus de (n) : ça donne (n->p) (une seul p)
% Par \Node :  (n,p) 2 fois, au-dessus de (n) : ça donne (n,p1,p2) (deux p)
%In addition, the set of unlabeled nodes of a graph $G$ is denoted $|G|^\var$.

The following definition introduces a new notion of \emph{graphic functions}.
These functions are used to relate graphs involved in a rewrite step, in addition to classical graph homomorphisms.

\begin{defi}[Graphic functions]
\label{defi:graph-graphic}
Let $G$ and $H$ be graphs and $\gamma:|G|\to|H|$ a function.
For each node $n$ of $G$, 
$\gamma$ is \emph{graphic at $n$} if 
either $n$ is unlabeled or both $n$ and $\gamma(n)$ are labeled, 
$\cL_H(\gamma(n))=\cL_G(n)$ and $\cS_H(\gamma(n))=\gamma^*(\cS_G(n))$.
And $\gamma$ is \emph{strictly graphic at $n$} if
either both $n$ and $\gamma(n)$ are unlabeled or both $n$ and $\gamma(n)$ are labeled, 
$\cL_H(\gamma(n))=\cL_G(n)$ and $\cS_H(\gamma(n))=\gamma^*(\cS_G(n))$.
For each set of nodes $\Gamma$ of $G$, 
$\gamma$ is \emph{graphic (resp. strictly graphic) on $\Gamma$}
if $\gamma$ is graphic (resp. strictly graphic) at every node in $\Gamma$.
\end{defi}
It should be noted that the property of being graphic (resp. strictly graphic) on $\Gamma$
involves the successors of the nodes in $\Gamma$, which may be outside $\Gamma$.

\begin{example}

Let us consider the graphs $G1$ and $G2$ given respectively in Fig~\ref{G1} and Fig~\ref{G2}. Let $\Gamma_1 =\{1,3\}$, $\Gamma_2 =\{1,2,3\}$ and $\Gamma_3 =\{1,2,3,4\}$.
Let $\gamma:|G|\to|H|$ be the  function defined by $\gamma = \{1 \mapsto a, 2 \mapsto b,3 \mapsto c,4 \mapsto d\}$. It is easy to check that
$\gamma$ is graphic on $\Gamma_2$, $\gamma$ is  strictly graphic on $\Gamma_1$,
$\gamma$ is not strictly graphic on $\Gamma_2$ and
$\gamma$ is not graphic on $\Gamma_3$. 

\begin{figure*}[ht]
\begin{minipage}[b]{0.5\linewidth}
\centering

 $$\xymatrix@R=1pc@C=1pc
     {    & 1:f \ar[dl]\ar[d]\ar[dr] \\
    2: \bullet & 3:\bullet & 4:nil
}$$

  \caption{G1}
  \label{G1}
\end{minipage}
\hspace{0.5cm}
\begin{minipage}[b]{0.5\linewidth}
\centering

 $$\xymatrix@R=1pc@C=1pc
     {    & a:f \ar[dl]\ar[d]\ar[dr] \\
    b: nil & c:\bullet & d:\bullet
}$$

  \caption{G2}
  \label{G2}
\end{minipage}
\end{figure*}

\end{example}
Clearly, a function $\gamma:|G|\to|H|$ underlies a graph morphism $g:G\to H$
if and only if it is graphic on $|G|$.
The next straightforward result will be useful.

\begin{lemm}
\label{lemm:graph-graphic}
Let $G$, $H$, $H'$ be graphs and let
$\gamma:|G|\to|H|$, $\gamma':|G|\to|H'|$, $\eta:|H|\to|H'|$ be functions
such that $\gamma'=\eta\circ\gamma$.
Let $\Gamma$ be a set of nodes of $G$. 
If $\gamma$ is strictly graphic on $\Gamma$
and $\gamma'$ is graphic on $\Gamma$,
then $\eta$ is graphic on $\gamma(\Gamma)$.
\end{lemm}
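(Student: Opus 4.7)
The plan is a node-by-node case analysis on $\gamma(\Gamma)$. Fix $m\in\gamma(\Gamma)$ and pick $n\in\Gamma$ with $\gamma(n)=m$; I must show $\eta$ is graphic at $m$ in the sense of Definition \ref{defi:graph-graphic}. Split on whether $m$ is labeled.

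If $m$ is unlabeled, the first clause of ``graphic at $m$'' is immediately satisfied, and there is nothing more to check. So assume $m$ is labeled. Here the role of strict graphicity of $\gamma$ at $n$ becomes crucial: it is precisely the strict version (not the plain version) that forces the labeling status of $n$ and $\gamma(n)=m$ to agree. Hence $n$ is also labeled, and strict graphicity further gives $\cL_H(m)=\cL_G(n)$ and $\cS_H(m)=\gamma^*(\cS_G(n))$.

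Now invoke graphicity of $\gamma'$ at the labeled node $n$: this yields that $\gamma'(n)$ is labeled with $\cL_{H'}(\gamma'(n))=\cL_G(n)$ and $\cS_{H'}(\gamma'(n))={\gamma'}^*(\cS_G(n))$. Since $\gamma'=\eta\circ\gamma$, we have $\gamma'(n)=\eta(m)$, so $\eta(m)$ is labeled with $\cL_{H'}(\eta(m))=\cL_G(n)=\cL_H(m)$, which handles the label condition. For the successor condition, chain the two identities obtained above:
\[
\cS_{H'}(\eta(m))={\gamma'}^*(\cS_G(n))=(\eta\circ\gamma)^*(\cS_G(n))=\eta^*(\gamma^*(\cS_G(n)))=\eta^*(\cS_H(m)),
\]
using functoriality of $(-)^*$ in the middle step. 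This completes the verification that $\eta$ is graphic at $m$, and since $m\in\gamma(\Gamma)$ was arbitrary, $\eta$ is graphic on $\gamma(\Gamma)$.

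The proof is essentially pure bookkeeping; the only conceptual point worth flagging is why strict graphicity of $\gamma$ is needed, rather than just graphicity: without strictness one could have $n$ unlabeled and $\gamma(n)=m$ labeled, and then no information about $\eta(m)$ would propagate from the hypothesis on $\gamma'$ at $n$ (since the latter is vacuous at unlabeled $n$). Strictness rules out exactly this asymmetric case, which is the sole place in the argument where a nontrivial hypothesis is really consumed.
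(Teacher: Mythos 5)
Your proof is correct: the case split on whether $m=\gamma(n)$ is labeled, the use of strictness to transfer labeledness from $m$ back to $n$, and the chaining $\cS_{H'}(\eta(m))={\gamma'}^*(\cS_G(n))=\eta^*(\gamma^*(\cS_G(n)))=\eta^*(\cS_H(m))$ is exactly the intended argument. The paper itself omits the proof, calling the lemma ``straightforward,'' so there is nothing to compare against; your closing remark correctly identifies why strict graphicity of $\gamma$ (rather than mere graphicity) is the indispensable hypothesis.
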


%%%%%%%%%%%%%%%%%%%%%%%%%%%%%%%%%%%%%%%%%%%%%%%
%%%%%%%%%%%%%%%%%%%%%%%%%%%%%%%%%%%%%%%%%%%%%%%
\section{Rewriting without cloning} 
\label{sec:het}

Roughly speaking, in the context of graph rewriting, 
a rewrite rule has a left-hand side graph $L$ and a right-hand side graph $R$,
and a rewrite step applied to a graph $G$ with an occurrence of $L$ 
returns a graph $H$ with an occurrence of $R$,
by replacing $L$ by $R$ in $G$.
We deal with termgraphs, 
so that a labeled node $p$ in $G$ outside $L$ 
and with its $i$-th successor $p'$ in $L$ must have 
some $i$-th successor $n'$ in $H$. For this purpose, 
we introduce a ``target'' function $\tau$,
from the nodes of $L$ to the nodes of $R$, 
and we decide that $n'$ must be $\tau(p')$.
The aim of this section is to define this process precisely.
The definitions and results in this section are simplified versions
of those in the next section.

In this section, 
a \emph{rewrite rule} is tuple $(L, R, \tau)$ made of 
two graphs $L$ and $R$  and a (total) function $\tau:|L|\to|R|$.
A \emph{morphism of rewrite rules}
from $T=(L,R,\tau)$ to $T_1=(L_1,R_1,\tau_1)$
is a pair of graph morphisms $(m,d)$ with $m:L\to L_1$, $d:R\to R_1$ 
such that $|d|\circ\tau=\tau_1\circ|m|$.
% ... category...

In this paper, the illustrations take place either in the category $\Set$ of sets 
% or in the category of graphs,  
or in a heterogeneous framework where the points 
stand for graphs, the solid arrows for graph morphisms 
and the dashed arrows for functions on nodes.
So, a rewrite rule $T=(L,R,\tau)$ can be illustrated as follows:
$$ \xymatrix@C=6pc{
  L \ar@{-->}[r]_{\tau} & 
      R \\ }
$$

It can be noted that 
each graph morphism $t:L\to R$ determines a rewrite rule where $\tau=|t|$.
In this case, for each graph morphism $m:L\to G$
the pushout of $t$ and $m$ in the category $\Gr$ is defined 
as the initial object in the category of cones over $t$ and $m$.
Let us generalize this definition to any rewrite rule $T=(L,R,\tau)$
and any graph morphism $m:L\to G$.
A \emph{heterogeneous cone over $T$ and $m$} is made of
a graph $H$, a function $\tau_1:|G|\to|H|$ and a graph morphism $d:R\to H$ 
such that $T_1=(G,H,\tau_1)$ is a rewrite rule, 
$(m,d):T\to T_1$ is a morphism of rewrite rules 
and $\tau_1$ is graphic on $|G|-|m(L)|$.
$$ \xymatrix@C=6pc{
  L \ar@{-->}[r]_{\tau} \ar[d]_{m} & 
      R \ar[d]^{d} \\
  G \ar@{-->}[r]_{\tau_1} & 
      H \\ }
$$
A \emph{morphism of heterogeneous cones over $T$ and $m$},
say $h:(H,\tau_1,d)\to(H',\tau'_1,d')$, 
is a graph morphism $h:H\to H'$ such that
$|h|\circ\tau_1=\tau'_1$ and $h\circ d=d'$.
This yields the category $\HTm$ of heterogeneous cones over $T$ and $m$.
A \emph{heterogeneous pushout} of $T$ and $m$ is defined as 
an initial object in the category $\HTm$. 

When a heterogeneous pushout exists, 
its initiality property implies that it is unique up to an isomorphism of heterogeneous cones.
A \emph{matching} of a graph $L$ 
is a graph morphism $m:L\to G$ such that $|m|$ is injective. 
It is easy to prove the existence of a heterogeneous pushout 
of a rewrite rule $T=(L,R,\tau)$ and a matching $m:L\to G$, as follows.
Let $(\mathcal{P})$ denote the following pushout of $\tau$ and $|m|$ in $\Set$: 
% dessin PO/Set
$$ \xymatrix@C=6pc{
  |L| \ar[r]_{\tau} \ar[d]_{|m|} & 
    |R| \ar[d]^{\delta} \\
  |G| \ar[r]_{\tau_1} & 
    \setH  \\
}$$
Then $\setH = \tau_1(|G|-|m(L)|) + \delta(|R|)$, 
and in addition the restriction of $\tau_1:|G|-|m(L)|\to\tau_1(|G|-|m(L)|)$ is bijective
and the restriction of $\delta:|R|\to \delta(|R|)$ is bijective.
Hence, a graph $H$ with set of nodes $\setH$ is defined simply by imposing that 
$\tau_1$ is strictly graphic on $|G|-|m(L)|$
and that $\delta$ is strictly graphic on $|R|$.
 It follows that $\delta=|d|$ for a graph morphism $d:R\to H$
and that $(H,\tau_1,d)$ forms a heterogeneous cone over $T$ and $m$.
Now, let us consider any heterogeneous cone $(H',\tau'_1,d')$ over $T$ and $m$.
Because of the pushout of sets $(\mathcal{P})$,
there is a unique function $\eta:|H|\to|H'|$ such that 
$\eta\circ\tau_1=\tau'_1$ and $\eta\circ|d|=|d'|$. 
In addition, it follows from lemma~\ref{lemm:graph-graphic}
that $\eta$ is graphic on $\tau_1(\Gamma)$ and also on $d(|R|)$.
So, $\eta$ underlies a graph morphism $h:H\to H'$.
Since the node functor is faithful, it follows that 
$(H,\tau_1,d)$ is a heterogeneous pushout of $T$ and $m$.

Now, given a rewrite rule $T=(L,R,\tau)$ and a matching $m:L\to G$,
the corresponding \emph{rewrite step} builds the graph morphism
$d:R\to H$, obtained from the heterogeneous pushout of $T$ and $m$.
It can be noted that $d$ is a matching of $R$.  

The induced rewrite relation over termgraphs is unfortunately not
satisfactory. Consider for instance the rule $f(x) \to g(x,
x)$. Informally, the application of such a rule on the termgraph
$1:f(2:a)$ can yield either the termgraph $1:g(2:a,2)$ or the
termgraph $1:g(2:a, 3:a)$ according to the way the term $g(x,x)$ is
represented as a termgraph.  However, the application of the
definition of a rewrite step, as given above, suggests to rewrite the
termgraph $1:f(2:a)$ into $1:g(2:\bullet,2)$ by means of the following
rule $(1:f(x:\bullet),1:g(x:\bullet, x),\tau = \{1 \mapsto 1, x
\mapsto x\})$.  The node $2$ is not labeled in the reduced
termgraph. This reflects the fact that the instance of $x$ cannot be
substituted or cloned in the right-hand side. We overcome this
drawback in the next section.

%\begin{example}[collapse]
%  Collapsing nodes is easily implemented in this formalism.
%We refer to example \ref{ex:graphconv} for the definition of the 
%graphical conventions. Consider for instance the following rule: 
%\begin{center}
% \begin{tabular}{|c|c|} \hline
% \multicolumn{2}{|c|}{$\tau :4 \mapsto 3, 6 \mapsto 5$} \\ \hline 
%  $\xymatrix@C=1pc@R=1pc{ 1:col \ar[d] & 4:g \ar[r] & 6:a  \\
%              2:f \ar[r] \ar[ur] & 3:g \ar[r] & 5:a
%     }$ 
%   & $\xymatrix@C=1pc@R=1pc{ 2:f \ar@/^.5pc/[r] \ar@/^-.5pc/[r] & 
%       3:g \ar[r] & 5:a}$  \\ \hline
  
% \end{tabular}
%\end{center}

%\end{example}

%%%%%%%%%%%%%%%%%%%%%%%%%%%%%%%%%%%%%%%%%%%%%%%
%%%%%%%%%%%%%%%%%%%%%%%%%%%%%%%%%%%%%%%%%%%%%%%
\section{Rewriting with cloning} 
\label{sec:clo}

In this section, the definitions and results of the previous
section are generalized in order to add a ``cloning'' process.
Indeed, in the resulting graph $H$ from section~\ref{sec:het} 
there is no node in $R$ with its image outside $R$.
This is an issue, which is solved in this section 
thanks to the notion of ``clone''. 
Roughly speaking, a clone of a labeled node $p$ in $G$
is a node $n$ in $H$ 
with the same label and ``the same'' successors as $p$,
where ``the same'' successors are defined via 
the target function $\tau$ from the previous section. 
The definition of a rewrite rule is generalized so that it yields 
the information about the way the images of the nodes in $L$
must be cloned by images of nodes in $R$.
The main result is theorem~\ref{theo:clo-po}:
under relevant definitions and assumptions, 
for each rewrite rule $T$ and matching $m$
there is a \emph{cloning pushout} of $T$ and $m$,
which can be built explicitly from a pushout of sets. 
Since each node in $L$ may have an arbitrary number of clones 
(maybe no clone at all), and a node in $R$ cannot be a clone of 
more than one node in $L$, 
the relation between the nodes in $L$ and their clones in $R$
takes the form of a partial function from $|R|$ to $|L|$.
In this paper, partial functions are denoted with the symbol ``$\parto$'',
the domain of a partial function $\sigma$ is denoted $\dom(\sigma)$, 
and the composition of partial functions is defined as usual.

\begin{defi}[Clones]
\label{defi:clo-clone}
Let $G$ and $H$ be graphs and $\tau:|G|\to|H|$ a function.
Then $p\in|H|$ is a \emph{$\tau$-clone} of $q\in|G|$ when: 
$p$ is labeled if and only if $q$ is labeled, 
and then $\cL_H(p)=\cL_G(q)$ and $\cS_H(p)=\tau^*(\cS_G(q))$.
\end{defi}
% donc : strict graphic vs clones...

\begin{defi}[Rewrite rule]
\label{defi:clo-rule}
A \emph{rewrite rule} is tuple $(L,R,\tau,\sigma)$ made of 
two graphs $L$ and $R$,
a function $\tau:|L|\to|R|$ 
and a partial function $\sigma:|R|\parto|L|$ 
such that each node $n$ in the domain of $\sigma$ 
is unlabeled or is a $\tau$-clone of  $\sigma(n)$.
A \emph{morphism of rewrite rules},
from $T=(L,R,\tau,\sigma)$ to $T_1=(L_1,R_1,\tau_1,\sigma_1)$
is a pair of graph morphisms $(m,d)$ with $m:L\to L_1$ and $d:R\to R_1$ 
such that $|d|\circ\tau=\tau_1\circ|m|$,
$d(\dom(\sigma)) \subseteq \dom(\sigma_1)$ 
and $|m|\circ\sigma=\sigma_1\circ|d|$ on $\dom(\sigma)$.
% ... category...
\end{defi}

In the previous section, we have dealt with the simple case 
where the domain of $\sigma$ is empty.

In the sequel, a rewrite rule $T=(L,R,\tau,\sigma)$ will be illustrated as follows:
$$ \xymatrix@C=6pc{
  L \ar@{-->}[r]_{\tau} & 
      R \ar@{--_{>}}@/_3ex/@<-1ex>[l]_{\sigma} \\ }
$$

\begin{minipage}{0.90\linewidth}
or depicted as opposite, where the lines $\tau$ and $\sigma$ contain the definitions
        of the functions $\tau$ and $\sigma$. 
\end{minipage}
\hspace{0.2cm}
\begin{minipage}{0.45\linewidth}
%\begin{center}
        \begin{tabular}{|c|c|}
         \hline
         \multicolumn{2}{|l|}{$\tau$:} \\ \hline 
         \multicolumn{2}{|l|}{$\sigma$:} \\ \hline
         L & R \\ \hline
        \end{tabular}
    %    \noindent
%\end{center}
\end{minipage}

%Consider a simple example of rule below. the operation $rc$ updates
%the information $3:\bullet$ accessible from symbol $f$ by the content
%of $4:\bullet$. This is done by operating a local redirection of the
%edge outgoing node $2:f$. This edge points the node $\tau(4) =
%4$ in $R$.
%\begin{center}
%\begin{tabular}{|c|c|}
%  \hline 
%  \multicolumn{2}{|c|}{$\tau : 2 \mapsto 1, i \mapsto i \mbox{ for } i  \in \{2,3,4\}$} \\ \hline 
%  \multicolumn{2}{|c|}{$\sigma : 3 \mapsto 3, 4 \mapsto 4$} \\ \hline
%  $\xymatrix{ 1:rc \ar^{1}[r] \ar^{2}[d]& 2:f \ar[d]\\
%              4:\bullet & 3 : \bullet
%   }$ & 
%  $\xymatrix{ & 2:f \ar[dl] \\ 
%              4:\bullet & 3:\bullet}$ \\ \hline
%\end{tabular}  
%\end{center}

%%  When unspecified, one has $\tau(i)=i$, following our conventions. 
  
%%  In our example rule, the edge from node $1$ to node $4$ 
%%in $L$ is thus the second argument of function $rc$.

\begin{example}[if-then-else]
  \label{ex:ifthenelse}

Below, we give the rewrite rules which define the If-then-else operator as it behaves in classical imperative languages.

\begin{center}
 \begin{tabular}{|c|c|}
  \hline 
  \multicolumn{2}{|c|}{$\tau : 1 \mapsto 5, 
                        2 \mapsto 5, 3 \mapsto 5, 4 \mapsto 5$} \\ \hline 
  \multicolumn{2}{|c|}{$\sigma : 5 \mapsto 3 $} \\ \hline
  $\xymatrix@C=1pc@R=1pc{ & 1:if \ar[d] \ar[dr] \ar[dl] &    \\
             2:true & 3:\bullet & 4:\bullet}$ 
   & $\xymatrix{5:\bullet}$  \\ \hline
 \end{tabular}
~~~~~~~~~~~~
 \begin{tabular}{|c|c|}
  \hline 
  \multicolumn{2}{|c|}{$\tau : 1 \mapsto 5, 
                        2 \mapsto 5, 3 \mapsto 5, 4 \mapsto 5$} \\ \hline 
  \multicolumn{2}{|c|}{$\sigma : 5 \mapsto 4$} \\ \hline
  $\xymatrix@C=1pc@R=1pc{ & 1:if \ar[d] \ar[dr] \ar[dl]     \\
             2:false & 3:\bullet & 4:\bullet}$ 
   & $\xymatrix{5:\bullet}$  \\  \hline
 \end{tabular}
\end{center}
The definition of $\tau$ ensures that the if-then-else expression is replaced by its value $\tau(1) = 5 $. The definition of $\sigma$ indicates that the value of the  if-then-else is its  second (resp. third) argument specified by $\sigma(5)= 3$ (resp. $\sigma(5)= 4$) in the rules above. Notice that if $\sigma$ were defined as the empty function, the if-then-else expression would evaluate to an unlabeled node. 

% given on the left side  and every node of the left handside to 
%the appropriate branch of the if-then-else. On the other hand the definition
%of $\sigma$ indicates that it is the actual value (the one defined through
%the matching) that is going to be the result of the rewrite step. If
%$\sigma$ had been defined as the empty function, then the result of the
%rewrite would only yield an unlabelled node. 
\end{example}

\begin{example}[Cloning data-structures]
  \label{ex:clon}
  In this example we give the rules to clone natural numbers, 
encoded with $succ$ and $zero$. The clone of $zero$ is done
using the following rule:

\begin{center}
 \begin{tabular}{|c|c|}
  \hline 
  \multicolumn{2}{|c|}{$\tau : 1 \mapsto 2, 2 \mapsto 2 $} \\ \hline 
  \multicolumn{2}{|c|}{$\sigma : 3 \mapsto 2, 2 \mapsto 2 $} \\ \hline
  $\xymatrix@C=1pc@R=1pc{ 1:clone \ar[d]    \\
                          2:zero} $ 
   & $\xymatrix{2:zero & 3:zero}$  \\ \hline
 \end{tabular}
\end{center}

One can note that the condition on the labeled nodes (ie $\tau$-clones, 
see def. \ref{defi:clo-rule}) in the 
domain of $\sigma$ is verified. This rule redirects all edges 
from $1$ to $2$, while the edges adjacent to $2$ remain unchanged.  

The second rule is defined as follows:
\begin{center}
 \begin{tabular}{|c|c|}
  \hline 
  \multicolumn{2}{|c|}{$\tau : 1 \mapsto 4, 2 \mapsto 2, 3 \mapsto 3 $} \\ \hline 
  \multicolumn{2}{|c|}{$\sigma : 3 \mapsto 3$} \\ \hline
  $\xymatrix@C=1pc@R=1pc{ 1:clone \ar[r] & 2:succ \ar[d]    \\
                          & 3:\bullet} $ 
   & $\xymatrix@C=1pc@R=1pc{2:succ \ar[d] & 4:succ \ar[d] \\
                3:\bullet & 5:clone \ar[l]}$  \\ \hline
 \end{tabular}
\end{center}

Notice that, in this case, it is not possible to define $\sigma(4)=2$
because sucessor of $4$ in $R$ is labeled by $clone$ and successor of $2$ in 
$L$ is labeled by $succ$, thus breaking the $\tau$-clone condition.
\end{example}

\begin{defi}[Cloning cone]
\label{defi:clo-cone}
Let $T=(L,R,\tau,\sigma)$ be a rewrite rule and $m:L\to G$ a graph morphism. 
A \emph{cloning cone over $T$ and $m$} is  a tuple $(H, \tau_1, d, \sigma_1)$ made of 
a graph $H$, a function $\tau_1:|G|\to|H|$, a graph morphism $d:R\to H$ 
and a partial function $\sigma_1:|H|\parto|G|$ 
such that $T_1=(G,H,\tau_1,\sigma_1)$ is a rewrite rule, 
$(m,d):T\to T_1$ is a morphism of rewrite rules, 
$\tau_1$ is graphic on $|G|-|m(L)|$
and $n_1$ is a $\tau_1$-clone of $\sigma_1(n_1)$ for each $n_1$ in the domain of $\sigma_1$.
$$ \xymatrix@C=6pc{
  L \ar@{-->}[r]_{\tau} \ar[d]_{m} & 
      R \ar[d]^{d} \ar@{--_{>}}@/_3ex/@<-1ex>[l]_{\sigma} \\
  G \ar@{-->}[r]_{\tau_1} & 
      H \ar@{--_{>}}@/_3ex/@<-1ex>[l]_{\sigma_1} \\ }
$$
A \emph{morphism of cloning cones over $T$ and $m$},
say $h:(H,\tau_1,d,\sigma_1)\to(H',\tau'_1,d',\sigma'_1)$,
is a graph morphism $h:H\to H'$ such that 
$|h|\circ\tau_1=\tau'_1$, $h\circ d=d'$, 
$h(\dom(\sigma_1)) \subseteq \dom(\sigma'_1)$ and 
$\sigma'_1\circ|h|=\sigma_1$ on $\dom(\sigma_1)$.
\\ This yields the category $\CTm$ of cloning cones over $T$ and $m$.
\end{defi}

\begin{defi}[Cloning pushout]
\label{defi:clo-pushout}
Let $T=(L,R,\tau,\sigma)$ be a rewrite rule and $m:L\to G$ a graph morphism. 
A \emph{cloning pushout of $T$ and $m$} 
is an initial object in the category $\CTm$ of cloning cones over $T$ and $m$.
\end{defi}

When a cloning pushout exists, 
its initiality implies that it is unique up to an isomorphism of cloning cones.
In theorem~\ref{theo:clo-po} we prove the existence of a cloning pushout of $T$ and $m$ 
under some injectivity assumption on $m$. 

\begin{defi}[Matching]
\label{defi:clo-match}
A \emph{matching} with respect to a rewrite rule $T=(L,R,\tau,\sigma)$
is a graph morphism $m:L\to G$ such that 
if $m(p)=m(p')$ for distinct nodes $p$ and $p'$ in $L$ 
then $\tau(p)$ and $\tau(p')$ are in $\dom(\sigma)$ and 
$\sigma(\tau(p))=\sigma(\tau(p'))$ in $L$.
\end{defi}

\begin{prop}
\label{prop:clo-match-set}
Let $T=(L,R,\tau,\sigma)$ be a  rewrite rule 
and $m:L\to G$ a matching with respect to $T$. 
Then the pushout of $\tau$ and $|m|$ in $\Set$: 
$$ \xymatrix@C=4pc{
  |L| \ar[r]_{\tau} \ar[d]_{|m|} & 
    |R| \ar[d]^{\delta} \\
  |G| \ar[r]_{\tau_1} & 
    \setH  \\
}$$
satisfies: 
  $$ \setH = \tau_1(\Gamma) + \delta(\Delta) + \delta(\Sigma) $$
where $\Gamma=|G|-|m(L)|$, $\Sigma=\dom(\sigma)$, $\Delta=|R|-\Sigma$ and:
\begin{itemize}
\item the restriction of $\tau_1:\Gamma\to\tau_1(\Gamma)$ is bijective,
\item the restriction of $\delta:\Delta\to \delta(\Delta)$ is bijective,
\item and the restriction of $\delta:\Sigma\to \delta(\Sigma)$ 
is such that if $\delta(n)=\delta(n')$ for distinct nodes $n$ and $n'$ in $\Sigma$
then $\sigma(n)=\sigma(n')$ in $L$.
\end{itemize}
In addition, there is a unique partial function $\sigma_1:\setH\parto|G|$ 
with domain $\delta(\Sigma)$
such that $|m|\circ\sigma=\sigma_1\circ\delta$. 
\end{prop}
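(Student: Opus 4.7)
The plan is to construct $\setH$ explicitly as the quotient $(|R|+|G|)/{\sim}$, where $\sim$ is the equivalence relation on the disjoint union $|R|+|G|$ generated by $\tau(x)\sim|m|(x)$ for $x\in|L|$, and where $\delta$ and $\tau_1$ are the canonical injections followed by the quotient map. The key structural observation is a \emph{zigzag analysis}: if $n,n'\in|R|$ satisfy $\delta(n)=\delta(n')$, then $n$ and $n'$ are joined by a finite alternating chain $n=\tau(x_0)$, $|m|(x_0)=|m|(x_0')$, $\tau(x_0')=\tau(x_1)$, $|m|(x_1)=|m|(x_1')$, \ldots, ending at $n'$. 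At any step where $x_i\neq x_i'$, the matching hypothesis (Definition~\ref{defi:clo-match}) forces $\tau(x_i),\tau(x_i')\in\Sigma$ and $\sigma(\tau(x_i))=\sigma(\tau(x_i'))$. By induction on chain length, whenever $n\neq n'$ we deduce $n,n'\in\Sigma$ and $\sigma(n)=\sigma(n')$.

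From this observation the conclusions about the decomposition of $\setH$ follow quickly. For $g\in\Gamma=|G|-|m(L)|$ there is no $x\in|L|$ with $|m|(x)=g$, so the $\sim$-class of $g$ is the singleton $\{g\}$; this yields at once the injectivity of $\tau_1|_\Gamma$ and the disjointness $\tau_1(\Gamma)\cap\delta(|R|)=\emptyset$. Combined with $\tau_1(|m(L)|)=\delta(\tau(|L|))\subseteq\delta(|R|)$ (pushout commutativity), which shows $\setH=\tau_1(\Gamma)\cup\delta(|R|)$, this proves the first summand split. For $n\in\Delta=|R|-\Sigma$, the zigzag analysis applied with $n\neq n'$ would force $n\in\Sigma$, a contradiction; hence the $|R|$-part of the class of $n$ reduces to $\{n\}$, yielding both that $\delta|_\Delta$ is injective and that $\delta(\Delta)\cap\delta(\Sigma)=\emptyset$. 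The statement about $\delta|_\Sigma$ is exactly what the zigzag analysis provides on $\Sigma$.

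For the partial function, I define $\sigma_1$ with domain $\delta(\Sigma)$ by $\sigma_1(\delta(n))=|m|(\sigma(n))$ for $n\in\Sigma$. Well-definedness follows from the third bullet: if $\delta(n)=\delta(n')$ for $n,n'\in\Sigma$, then $\sigma(n)=\sigma(n')$, hence $|m|(\sigma(n))=|m|(\sigma(n'))$. The triple decomposition of $\setH$ ensures the prescribed domain is unambiguous, and uniqueness is immediate since the equation $|m|\circ\sigma=\sigma_1\circ\delta$ pins down the value on each point of $\delta(\Sigma)$.

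I expect the main obstacle to be the zigzag analysis itself: one application of the matching hypothesis handles zigzag steps of length one, but the pushout equivalence relation allows arbitrarily long alternating chains between $|R|$ and $|G|$, so one must check by induction that the ``in $\Sigma$ and $\sigma$-invariant'' property propagates through every step. Once this is in place, the remaining arguments are straightforward set-theoretic bookkeeping about injections and disjoint unions.
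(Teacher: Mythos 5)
Your proof is correct and follows essentially the same route as the paper's: both rest on the zigzag/chain analysis of the set-theoretic pushout together with the matching condition to show that distinct identified nodes of $|R|$ must lie in $\Sigma$ with equal $\sigma$-values, and both define $\sigma_1$ by $\sigma_1(\delta(n))=|m|(\sigma(n))$ with well-definedness and uniqueness following from that analysis. The only difference is presentational: you spell out the induction along the full alternating chain, whereas the paper exhibits a single chain segment and asserts that the decomposition follows.
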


\begin{proof}
Clearly $\setH = \tau_1(\Gamma) + \delta(|R|)$
with the restriction of $\tau_1:\Gamma\to\tau_1(\Gamma)$ bijective.
If  $\delta(n)=\delta(n')$ for distinct nodes $n$ and $n'$ in $R$,
then there is a chain from $n$ to $n'$ made of pieces like this one: 
$$ \xymatrix@C=2pc{
  & p \ar@{|->}[dl]_{\tau} \ar@{|->}[dr]^{|m|} && 
    p'  \ar@{|->}[dl]_{|m|} \ar@{|->}[dr]^{\tau} & \\
  \tin && p_1 && \tin' \\
}$$
with $\tin,\tin'\in|R|$, $p,p'\in|L|$, $p_1\in|G|$, 
and it can be assumed that $\tin\ne\tin'$ and $p\ne p'$. 
Since $m$ is a matching, 
$\tin$ and $\tin'$ are in $\Sigma$ and $\sigma(\tin)=\sigma(\tin')$.
The decomposition of $\setH$ follows.
\\ 
Now, let $n_1\in\delta(\Sigma)$ and let us choose some $n\in\Sigma$ 
such that $n_1=\delta(n)$. If $\sigma_1$ exists, 
then $\sigma_1(n_1)=\sigma_1(\delta(n))=m(\sigma(n))$. 
On the other hand, if $n'\in\Sigma$ is another node 
such that $n_1=\delta(n)$, then we have just proved that 
$\sigma(n)=\sigma(n')$, so that $m(\sigma(n))$
does not depend on the choice of $n$, it depends only on $n_1$.
So, there is a unique $\sigma_1:\setH\parto|G|$ as required, 
it is defined by $\sigma_1(n_1)=m(\sigma(n))$ for any $n\in\Sigma$ 
such that $n_1=\delta(n)$.
\end{proof}

\begin{prop}
\label{prop:clo-match-gr}
Let $m:L\to G$ be a matching with respect to a rewrite rule $T=(L,R,\tau,\sigma)$.
The pushout of $\tau$ and $|m|$ in $\Set$,
with $\sigma_1$ as in proposition~\ref{prop:clo-match-set}, 
underlies a cloning cone over $T$ and $m$.
% a unique ? ou bien unicité conséquence du theo ?
\end{prop}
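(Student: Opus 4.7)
The plan is to equip the set $\setH$ of Proposition~\ref{prop:clo-match-set} with labels and successors so that, together with $\tau_1$, $\delta$ and $\sigma_1$, it satisfies every clause of Definition~\ref{defi:clo-cone}. Guided by the cloning-free construction of Section~\ref{sec:het}, I will define $\cL_H$ and $\cS_H$ piecewise on the three disjoint parts $\tau_1(\Gamma)$, $\delta(\Delta)$ and $\delta(\Sigma)$. On $\tau_1(\Gamma)$ the map $\tau_1$ is bijective, so I simply impose that $\tau_1$ be strictly graphic on $\Gamma$; on $\delta(\Delta)$, where $\delta$ is bijective, I impose strict graphicity of $\delta$ on $\Delta$. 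The real work concerns $\delta(\Sigma)$, on which $\delta$ may identify several preimages.

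On $\delta(\Sigma)$, for $x=\delta(n)$ with $n\in\Sigma$, I set $\cL_H(x)=\cL_R(n)$ and $\cS_H(x)=\delta^*(\cS_R(n))$ whenever some preimage of $x$ in $\Sigma$ is labeled in $R$; otherwise, if $m(\sigma(n))$ is labeled in $G$, I set $\cL_H(x)=\cL_G(m(\sigma(n)))$ and $\cS_H(x)=\tau_1^*(\cS_G(m(\sigma(n))))$; otherwise $x$ is left unlabeled. Well-definedness rests on Proposition~\ref{prop:clo-match-set}: whenever $\delta(n)=\delta(n')$ with $n\ne n'$ in $\Sigma$, we have $\sigma(n)=\sigma(n')$ and thus $m(\sigma(n))=m(\sigma(n'))$; moreover, when $n$ and $n'$ are both labeled, the $\tau$-clone condition of Definition~\ref{defi:clo-rule} gives $\cL_R(n)=\cL_L(\sigma(n))=\cL_R(n')$ and $\cS_R(n)=\tau^*(\cS_L(\sigma(n)))=\cS_R(n')$, whence $\delta^*(\cS_R(n))=\delta^*(\cS_R(n'))$.

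With $H$ constructed, I would verify the four clauses of Definition~\ref{defi:clo-cone} in turn. That $\tau_1$ is graphic on $\Gamma=|G|-|m(L)|$ follows from strict graphicity. That $\delta$ underlies a graph morphism $d:R\to H$ is built in on $\Delta$ and, on labeled nodes of $\Sigma$, is the first branch of the definition above, with nothing to check on unlabeled nodes. That $(m,d)$ is a morphism of rewrite rules reduces to the three equations $|d|\circ\tau=\tau_1\circ|m|$ (from the set pushout), $d(\dom(\sigma))\subseteq\dom(\sigma_1)=\delta(\Sigma)$ (tautological), and $|m|\circ\sigma=\sigma_1\circ|d|$ on $\dom(\sigma)$ (the defining property of $\sigma_1$). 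Finally, to see that every $n_1=\delta(n)\in\dom(\sigma_1)$ is a $\tau_1$-clone of $\sigma_1(n_1)=m(\sigma(n))$, I split on the three branches above: in the labeled-preimage branch I use the chain $\delta^*(\cS_R(n))=\delta^*(\tau^*(\cS_L(\sigma(n))))=\tau_1^*(|m|^*(\cS_L(\sigma(n))))=\tau_1^*(\cS_G(m(\sigma(n))))$, obtained from $\delta\circ\tau=\tau_1\circ|m|$ together with the graph-morphism property of $m$, and in the remaining two branches the clone property holds directly by construction.

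The main obstacle is this last clause. A single node of $\delta(\Sigma)$ may admit several preimages in $\Sigma$ of mixed labeling status, and an unlabeled preimage $n$ may nevertheless satisfy $m(\sigma(n))$ labeled; the construction of $\cL_H$ and $\cS_H$ on $\delta(\Sigma)$ must be formulated so that the $\tau_1$-clone condition holds simultaneously with respect to every preimage, which is achieved by combining the $\tau$-clone hypothesis of Definition~\ref{defi:clo-rule} with the pushout identity $\delta\circ\tau=\tau_1\circ|m|$.
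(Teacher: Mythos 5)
Your proof is correct and follows essentially the same route as the paper's: the graph $H$ is built on the decomposition $\setH=\tau_1(\Gamma)+\delta(\Delta)+\delta(\Sigma)$ by strict graphicity on the first two pieces and the $\tau_1$-clone condition on the third, with the same key chain of equalities obtained from $\delta\circ\tau=\tau_1\circ|m|$ and the fact that $m$ is a graph morphism. The only difference is presentational: the paper imposes the clone condition on $\delta(\Sigma)$ directly and then verifies that $\delta$ is graphic there, whereas you read the labels off $R$ when a labeled preimage exists and then verify the clone condition --- the same computation in the opposite direction.
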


\begin{proof}
First, let us define a graph $H$ with set of nodes $\setH$.
According to proposition~\ref{prop:clo-match-set}, 
and with the same notations, 
a graph $H$ with set of nodes $\setH$ is defined by imposing that 
$\tau_1$ is strictly graphic on $\Gamma$,
that $\delta$ is strictly graphic on $\Delta$, 
and that each node $n_1\in\delta(\Sigma)$ is a $\tau_1$-clone of $q_1$, 
where $q_1=\sigma_1(n_1)$. 

Now, let us prove that $\delta$ underlies a graph morphism $d:R\to H$.
Since $\delta$ is graphic on $\Delta$, 
we have to prove that $\delta$ is also graphic on $\Sigma$.
Let $n\in\Sigma$ and $n_1=\delta(n)$.
If $n$ is unlabeled there is nothing to prove,
otherwise let $q=\sigma(n)$, 
then $q$ is labeled, $\cL_R(n)=\cL_L(q)$ and $\cS_R(n)=\tau^*(\cS_L(q))$.
Then $m(q)=m(\sigma(n))=\sigma_1(\delta(n))=q_1$,
and from the fact that $m$ is a graph morphism we get 
$\cL_L(q)=\cL_G(q_1)$ and $|m|^*(\cS_L(q))=\cS_G(q_1)$.
The definition of $H$ imposes 
$\cL_G(q_1)=\cL_H(n_1)$ and $\tau_1^*(\cS_G(q_1))=\cS_H(n_1)$.
Altogether, $\cL_R(n)=\cL_H(n_1)$ and 
$\cS_H(n_1)=(\tau_1^*(|m|^*(\cS_G(q)))=
\delta^*(\tau^*(\cS_G(q)))=\delta^*(\cS_R(n))$,
so that indeed $\delta$ is also graphic on $\Sigma$.

Finally, it is easy to check that  this yields a cloning cone over $T$ and $m$.
\end{proof}

\begin{theo}
\label{theo:clo-po}
Given a rewrite rule $T=(L,R,\tau,\sigma)$ and a matching $m:L\to G$
with respect to $T$, 
the cloning cone over $T$ and $m$ defined in proposition~\ref{prop:clo-match-gr}
is a pushout of $T$ and $m$. 
\end{theo}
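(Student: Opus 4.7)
The plan is to mimic the argument at the end of Section~\ref{sec:het}, which built a heterogeneous pushout via the set-theoretic pushout $(\mathcal{P})$, but now adapted to handle the partial function $\sigma_1$ and the cloned nodes in $\delta(\Sigma)$.

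Fix an arbitrary cloning cone $(H',\tau'_1,d',\sigma'_1)$ over $T$ and $m$. By the universal property of the pushout $(\mathcal{P})$ of $\tau$ and $|m|$ in $\Set$, applied to the pair $(\tau'_1,|d'|)$ (which commutes with $\tau$ and $|m|$ since $(m,d')$ is a morphism of rewrite rules), there is a unique function $\eta:\setH\to|H'|$ with $\eta\circ\tau_1=\tau'_1$ and $\eta\circ\delta=|d'|$. The claim to prove is that $\eta$ underlies a graph morphism $h:H\to H'$ and that $h$ is the unique morphism of cloning cones from the cone of Proposition~\ref{prop:clo-match-gr} to $(H',\tau'_1,d',\sigma'_1)$.

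The heart of the argument is showing that $\eta$ is graphic on all of $\setH$. I will use the decomposition $\setH=\tau_1(\Gamma)+\delta(\Delta)+\delta(\Sigma)$ from Proposition~\ref{prop:clo-match-set}. On $\tau_1(\Gamma)$: $\tau_1$ is strictly graphic on $\Gamma$ by construction, and $\tau'_1$ is graphic on $\Gamma=|G|-|m(L)|$ by the definition of a cloning cone, so Lemma~\ref{lemm:graph-graphic} gives graphicity of $\eta$ there. On $\delta(\Delta)$: $\delta$ is strictly graphic on $\Delta$ by construction and $|d'|=\eta\circ\delta$ is graphic on $\Delta\subseteq|R|$ (because $d'$ is a graph morphism), so Lemma~\ref{lemm:graph-graphic} again applies.

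The main obstacle is the piece $\delta(\Sigma)$, where $\delta$ is only graphic (not strictly so), so Lemma~\ref{lemm:graph-graphic} is not directly available. Here I would argue by hand at a node $n_1=\delta(n)$ with $n\in\Sigma$. If $n_1$ is unlabeled, nothing to check. If $n_1$ is labeled, then by the $\tau_1$-clone property defining $H$, $q_1:=\sigma_1(n_1)=m(\sigma(n))$ is labeled with $\cL_H(n_1)=\cL_G(q_1)$ and $\cS_H(n_1)=\tau_1^*(\cS_G(q_1))$. Since $(H',\tau'_1,d',\sigma'_1)$ is a cloning cone, the condition $h(\dom(\sigma_1))\subseteq\dom(\sigma'_1)$ that we require of $h$ forces $\eta(n_1)=d'(n)\in\dom(\sigma'_1)$ with $\sigma'_1(\eta(n_1))=\sigma_1(n_1)=q_1$; this is exactly the content of $(m,d'):T\to(G,H',\tau'_1,\sigma'_1)$ being a morphism of rewrite rules. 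So $\eta(n_1)$ is a $\tau'_1$-clone of $q_1$, giving $\cL_{H'}(\eta(n_1))=\cL_G(q_1)=\cL_H(n_1)$ and $\cS_{H'}(\eta(n_1))=(\tau'_1)^*(\cS_G(q_1))=(\eta\circ\tau_1)^*(\cS_G(q_1))=\eta^*(\cS_H(n_1))$, as required.

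Once $\eta$ is shown to be graphic on $\setH$ it underlies a graph morphism $h:H\to H'$, and faithfulness of the node functor makes $h$ unique among graph morphisms with $|h|=\eta$. The four defining conditions for a morphism of cloning cones are then immediate: $|h|\circ\tau_1=\tau'_1$ and $h\circ d=d'$ by construction of $\eta$; $h(\dom(\sigma_1))\subseteq\dom(\sigma'_1)$ since $\dom(\sigma_1)=\delta(\Sigma)$ and $h\circ d=d'$ sends $\Sigma$ into $\dom(\sigma'_1)$; and $\sigma'_1\circ|h|=\sigma_1$ on $\delta(\Sigma)$ by the same clone-compatibility computation used above. Finally, any morphism of cloning cones $h$ must in particular satisfy $|h|\circ\tau_1=\tau'_1$ and $|h|\circ\delta=|d'|$, so $|h|$ coincides with $\eta$ by the universal property of $(\mathcal{P})$, and again faithfulness of the node functor yields uniqueness of $h$, establishing initiality of the cone constructed in Proposition~\ref{prop:clo-match-gr}.
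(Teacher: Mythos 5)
Your proposal is correct and follows essentially the same route as the paper's own proof: obtain $\eta$ from the universal property of the set pushout, apply Lemma~\ref{lemm:graph-graphic} on $\tau_1(\Gamma)$ and $\delta(\Delta)$, handle $\delta(\Sigma)$ by hand via the $\tau_1$-clone/$\tau'_1$-clone comparison at the common node $q_1=\sigma_1(n_1)$, and conclude by faithfulness of the node functor. The only cosmetic difference is that you momentarily phrase the membership $\eta(n_1)\in\dom(\sigma'_1)$ as a condition ``required of $h$'' before correctly attributing it to $(m,d')$ being a morphism of rewrite rules, which is exactly how the paper justifies it.
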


\begin{proof}
The cloning cone over $T$ and $m$ from proposition~\ref{prop:clo-match-gr} is denoted
$(m,d):T\to T_1$ with $T_1=(G,H,\tau_1,\sigma_1)$. 
Let us consider any cloning cone over $T$ and $m$, say  
$(m,d'):T\to T'_1$ with $T'_1=(G',H',\tau'_1,\sigma'_1)$. 
Since $(m,d)$ underlies a pushout of sets, 
there is a unique function $\eta:|H|\to|H'|$ such that $\eta\circ|d|=|d'|$ and 
$\eta\circ\tau_1=\tau'_1$. 
Let $\Sigma=\dom(\sigma)$ and $\Sigma_1=\dom(\sigma_1)$. 
Because the node functor is faithful, the result will follow if we can prove 
that $\eta(\Sigma_1)\subseteq\Sigma'_1$ and $\sigma'_1\circ\eta=\sigma_1$ on $\Sigma_1$, 
and that $\eta$ underlies a graph morphism.
\\ 
We have $\eta(\Sigma_1)=\eta(d(\Sigma))=d'(\Sigma)\subseteq\Sigma'_1$,
and for each $n_1\in\Sigma_1$, let $n\in\Sigma$ such that $n_1=d(n)$,
then on one hand $\sigma'_1(\eta(n_1))=\sigma'_1(\eta(d(n)))=\sigma'_1(d'(n))=m(\sigma(n))$
and on the other hand $\sigma_1(n_1)=\sigma_1(d(n))=m(\sigma(n))$,
hence as required $\sigma'_1(\eta(n_1))=\sigma_1(n_1)$.
\\ 
In order to check that $\eta$ underlies a graph morphism $h:H\to H'$, 
we use the decomposition of $\setH$ from proposition~\ref{prop:clo-match-set}
and the construction of the cloning cone $(m,d)$ in % the proof of...
proposition~\ref{prop:clo-match-gr}.
It follows immediately from lemma~\ref{lemm:graph-graphic}
that $\eta$ is graphic on $\tau_1(\Gamma)$ and also on $d(\Delta)$.
Let us prove that $\eta$ is graphic on $\Sigma_1$. 
Let $n_1\in\Sigma_1$, $q_1=\sigma_1(n_1)$ and $n'_1=\eta(n_1)$.
Then $q_1=\sigma'_1(n'_1)$ because $\sigma'_1\circ\eta=\sigma_1$.
So, $n_1$ is a $\tau_1$-clone of $q_1$ 
and $n'_1$ is a $\tau'_1$-clone of the same node $q_1$.
This means that $\cL_{H'}(n'_1)=\cL_G(q_1)=\cL_H(n_1)$
and that $\cS_{H'}(n'_1)=(\tau'_1)^*(\cS_G(q_1))=
\eta^*(\tau_1^*(\cS_G(q_1)))=\eta^*(n_1)$.
So, $\eta$ is graphic on $\Sigma_1$, and since $d(\Sigma)\subseteq\Sigma_1$, 
it follows that $\eta$ is graphic on $d(\Sigma)$.
Altogether, $\eta$ is graphic on the whole of $|H|$,
which means that $\eta=|h|$ for a graph morphism $h:H\to H'$.
This concludes the proof.
\end{proof}

\begin{defi}[Rewrite step]
\label{defi:clo-step}
Given a rewrite rule $T=(L,R,\tau,\sigma)$ and a matching 
$m:L\to G$ with respect to $T$, 
the corresponding \emph{rewrite step} 
builds the graph morphism $d:R\to H$,
obtained from the cloning pushout of $T$ and $m$.
\end{defi}

\begin{example}
\label{clone}
We go back to the rule $f(x) \to g(x,x)$ discussed at the end of
section~\ref{sec:het}. This rule can be represented in our framework
in different manners according to the way the term $g(x,x)$ is
represented as a termgraph and also to the way the functions $\tau$
and $\sigma$ are defined. We give below two different rules.  Let $G$
be the termgraph $1:f(2:a)$. The first rule (Rule1) rewrites the termgraph $G$
into $1:g(2:a, 2)$, while the second rule (Rule2) rewrites $G$ into $1:g(2:a,
3:a)$.  The node $2$ and $3$ in $1:g(2:a, 3:a)$ are clones of node $2$
in $G$.
 
%\begin{center}
% \begin{tabular}{|c|c|}
%  \hline 
%  \multicolumn{2}{|c|}{$\tau : 1 \mapsto 1, 2 \mapsto 2$} \\ \hline 
%  \multicolumn{2}{|c|}{$\sigma : 2 \mapsto 2, 3 \mapsto 2$} \\ \hline
%  $\xymatrix@C=1pc@R=1pc{ 1:f \ar[d]     \\
%                          2:\bullet} $ 
%   & $\xymatrix{  1:g \ar@/^.5pc/[d] \ar@/_.5pc/[d]  \\
%                2:\bullet }$  \\ \hline
% \end{tabular}
%\end{center}

%\begin{center}
% \begin{tabular}{|c|c|}
%  \hline 
%  \multicolumn{2}{|c|}{$\tau : 1 \mapsto 1, 2 \mapsto 2$} \\ \hline 
%  \multicolumn{2}{|c|}{$\sigma : 2 \mapsto 2, 3 \mapsto 2$} \\ \hline
%  $\xymatrix@C=1pc@R=1pc{ 1:f \ar[d]     \\
%                          2:\bullet} $ 
%   & $\xymatrix{ 1:g \ar[d]\ar[dr] & \\
%                2:\bullet & 3:\bullet }$  \\ \hline
% \end{tabular}
%\end{center}

\begin{figure*}[ht]
\begin{minipage}[b]{0.5\linewidth}
\centering
 \begin{tabular}{|c|c|}
  \hline 
  \multicolumn{2}{|c|}{$\tau : 1 \mapsto 1, 2 \mapsto 2$} \\ \hline 
  \multicolumn{2}{|c|}{$\sigma : 2 \mapsto 2$} \\ \hline
  $\xymatrix@C=1pc@R=1pc{ 1:f \ar[d]     \\
                          2:\bullet} $ 
   & $\xymatrix@C=1pc@R=1pc{  1:g \ar@/^.5pc/[d] \ar@/_.5pc/[d]  \\
                2:\bullet }$  \\ \hline
 \end{tabular}
  \caption{Rule1}
  \label{rule1}
\end{minipage}
\hspace{0.5cm}
\begin{minipage}[b]{0.5\linewidth}
\centering
 \begin{tabular}{|c|c|}
  \hline 
  \multicolumn{2}{|c|}{$\tau : 1 \mapsto 1, 2 \mapsto 2$} \\ \hline 
  \multicolumn{2}{|c|}{$\sigma : 2 \mapsto 2, 3 \mapsto 2$} \\ \hline
  $\xymatrix@C=1pc@R=1pc{ 1:f \ar[d]     \\
                          2:\bullet} $ 
   & $\xymatrix@C=1pc@R=1pc{ 1:g \ar[d]\ar[dr] & \\
                2:\bullet & 3:\bullet }$  \\ \hline
 \end{tabular}

  \caption{Rule2}
  \label{rule2}
\end{minipage}
\end{figure*}
\end{example}

\section{Examples}
\label{sec:examples}

\begin{minipage}{0.90\linewidth}
In this section, we give some illustrating examples.
We represent a rewrite step $G \to H$ performed using a rewrite rule
$(L,R,\tau,\sigma)$ as in the figure opposite.
We assume in the given examples that the matching  morphism $m : L \to G$ is such that
$m(i) = i$.
\end{minipage}
\hspace{0.2cm}
\begin{minipage}{0.45\linewidth}
        \begin{tabular}{|c|c|}
         \hline
         \multicolumn{2}{|c|}{$\tau$:} \\ \hline 
         \multicolumn{2}{|c|}{$\sigma$:} \\ \hline
         L & R \\ \hline
         G & H \\ \hline
 %        \multicolumn{2}{|c|}{$\tau1$:} \\ \hline 
 %        \multicolumn{2}{|c|}{$\sigma1$:} \\ \hline
        \end{tabular}
\end{minipage}

%\begin{example}[Insertion in a circular list]
\section*{Insertion in a circular list}

  In this example we give a rule which defines the insertion of an
  element at the head of a circular list of size greater that one. In
  this rule, node $3$ is the head of the list, and $6$ is the last
  element of the list. Notice that, in $R$, the pointer to the head of
  the list, the second argument of node $6$, has been moved from $3$ to
  the new node $11$ in $R$. The definition of $\tau$ is such that all pointers to
  the head of the list are moved from $3$ to $11$ ($\tau(3)=11$).
 We apply the rule on a circular list of four items.

\begin{center}
\begin{tabular}{|c|c|}
\hline
 \multicolumn{2}{|c|}{$\tau:1 \mapsto 11 , 3 \mapsto 11,  i \mapsto i \mbox{ for } i  \in \{2,4,5,6,7\}  $} \\ \hline 
 \multicolumn{2}{|c|}{$\sigma : 2 \mapsto 2, 4 \mapsto 4, 
                                5 \mapsto 5, 7 \mapsto 7$} \\ \hline 
 $\xymatrix@C=.5pc@R=0.5pc{
   1:ins \ar[d] \ar[r]& 3:cons \ar[d] \ar[r] & 4:\bullet \\ 
   2:\bullet &          5: \bullet & 6: cons \ar[ul] \ar[r] & 7 : \bullet}$  
 & $\xymatrix@C=.5pc@R=0.5pc{ 11:cons \ar[r] \ar[d] & 3:cons \ar[d] \ar[r] & 4:\bullet \\
               2:\bullet & 5:\bullet & 6: cons \ar[ull] \ar[r] & 
               7 : \bullet
   }$ 
\\  \hline
   $\xymatrix@C=.5pc@R=0.5pc{
     2:e & 1:ins \ar[l] \ar[dl] & 0:h \ar[l] \ar[dll] \\
     3:cons  \ar[r] \ar[d] & 4:cons  \ar[d] \ar[r] & 8:cons  \ar[d]\ar[r]
                         & 6:cons \ar[d] \ar@/^-1pc/[lll]\\
     5:a & 9:b & 10:c & 7:d 
   }$
 & $\xymatrix@C=.5pc@R=0.5pc{2:e & 11:cons \ar[l] \ar[dl] & 
     0:h \ar@/^1pc/[l] \ar@/^-.5pc/[l] \\
     3:cons  \ar[r] \ar[d] & 4:cons  \ar[d] \ar[r] & 8:cons  \ar[d]\ar[r]
                         & 6:cons \ar[d] \ar[llu]\\
     5:a & 9:b & 10:c & 7:d        
   }$ \\ \hline  
\end{tabular}
\end{center}
%\end{example}

%\begin{example}[Append]
\section*{Appending linked lists}
  \label{ex:append}

\begin{minipage}{0.66\linewidth}

  We now consider the rules for the operation ``$+$'' which appends
  two linked lists. The lists are supposed to be built with the
  constructors $cons$, and $nil$.
The base case is defined when the first argument is $nil$ as in the rule opposite. 
\end{minipage}
\hspace{0.2cm}
\begin{minipage}{0.29\linewidth}
%\begin{center}
 \begin{tabular}{|c|c|}
 \hline
 \multicolumn{2}{|c|}{$\tau :1 \mapsto 3, 2 \mapsto 3, 3 \mapsto 3$} \\ \hline 
 \multicolumn{2}{|c|}{$\sigma : 3 \mapsto 3$} \\ \hline
   $\xymatrix@C=1pc@R=1pc{1:+ \ar[d] \ar[r] &  3:\bullet \\
              2:nil}$ 
   & $\xymatrix{3:\bullet}$ \\ \hline
 \end{tabular}
%\end{center}
\end{minipage}

\begin{minipage}{0.50\linewidth}
When the first argument of $+$ is a list different from $nil$, we call 
an auxiliary function denoted ``$+1$'', of arity~3. The role of this function 
is to go through the first list until the end and concatenate the two lists just by pointer redirection.
The first call to the operation $+1$ is done by the rule opposite: 
\end{minipage}
\hspace{0.2cm}
\begin{minipage}{0.40\linewidth}
%\begin{center}
 \begin{tabular}{|c|c|}
    \hline
    \multicolumn{2}{|c|}{$\tau : i \mapsto i \mbox{ for } i  \in \{1,2,3,4,5 \}$} \\ \hline 
    \multicolumn{2}{|c|}{$\sigma : 3 \mapsto 3, 
                          4 \mapsto 4, 5 \mapsto 5$} \\ \hline
   $\xymatrix@C=1pc@R=1pc{1 : + \ar[d] \ar[r] & 5:\bullet \\
              2 : cons \ar[d] \ar[r] & 3:\bullet \\
              4 : \bullet}$ 
   & $\xymatrix@C=1pc@R=1pc{1 : +1 \ar[d] \ar[dr] \ar[r]
                        & 5 : \bullet \\
                2 : cons \ar[d] \ar[r] & 3 : \bullet \\
                4 : \bullet}$ \\ \hline
 \end{tabular}
%\end{center}
\end{minipage}

The second argument of $+1$ is used to go through the list starting 
at node $2$ to get the last element of the list. This is implemented by the following rule~:
%The following rule is used to make a step beyond in the list. It is 
%simply implemented by moving the second argument of $+1$ from node 
%$3$ to node $4$. 
\begin{center}
 \begin{tabular}{|c|c|}
    \hline
    \multicolumn{2}{|c|}{$\tau : i \mapsto i \mbox{ for } i  \in \{1,2,3,4,5,6,7,8 \}$} \\ \hline 
    \multicolumn{2}{|c|}{$\sigma : 2 \mapsto 2, 
                          6 \mapsto 6, 5 \mapsto 5, 
                          7 \mapsto 7, 8 \mapsto 8$} \\ \hline
     $\xymatrix@C=1pc@R=1pc{1: +1 \ar[d] \ar[dr] \ar[r] & 8:\bullet \\ 
                2: \bullet & 3 : cons \ar[r] \ar[d] 
                & 4:cons \ar[r]\ar[d] & 5:\bullet \\
                & 6:\bullet & 7:\bullet}$
    &$\xymatrix@C=1pc@R=1pc{1: +1 \ar[d] \ar[drr] \ar[r] & 8:\bullet \\
                         2:\bullet & 3 : cons \ar[d] \ar[r]
                        & 4:cons \ar[d] \ar[r] & 5:\bullet \\
                        & 6:\bullet & 7:\bullet}$ \\ \hline
 \end{tabular}
\end{center}

The last case for operation $+1$, is implemented as follows. We simply 
redirect the second edge from $3$ to $4$ (which is $nil$) towards
$6$ (e.g., $\tau(4) = 6$), which is the head of the second list to append. 
The overall result of the operation $+1$, is the head of first list, node $2:\bullet$. This is implemented by $\tau(1) = 2$.
\begin{center}
 \begin{tabular}{|c|c|}
    \hline
    \multicolumn{2}{|c|}{$\tau : 4 \mapsto 6, 1 \mapsto 2, i \mapsto i \mbox{ for } i  \in \{2,3,5,6 \}$} \\ \hline 
    \multicolumn{2}{|c|}{$\sigma : 2 \mapsto 2, 
                          6 \mapsto 6, 5 \mapsto 5$} \\ \hline
     $\xymatrix@C=1pc@R=1pc{1:+1\ar[d] \ar[dr] \ar[r] & 6:\bullet \\ 
                2: \bullet & 3 : cons \ar[r] \ar[d] & 4:nil \\
                & 5:\bullet}$
    &$\xymatrix@C=1pc@R=1pc{2:\bullet & 3:cons \ar[r] \ar[d] 
                            & 6:\bullet \\
                                & 5:\bullet}$ \\ \hline 
 \end{tabular}
\end{center}
%\end{example}

%\begin{example}[Memory freeing]
\section*{Memory freeing}
  \label{ex:memfree} 
In this example we show how we can free the
  memory used by a circular list. As we are concerned with termgraphs
  where every function symbol has a fixed arity, it is not possible to
  create dangling pointers nor to remove useless pointers. This
  constraint is expressed by the fact that every node in a left-hand
  side $L$ must have an image in the right-hand side $R$ by $\tau$.
  
  The operation $free$ has two arguments. The first one is a
  particular node labeled by a constant \emph{null}. This constant is
  dedicated to be the target of the edges which were pointing the
  freed nodes. The second argument of $free$ is the list of cells to
  be freed.

Below, we give a rule defining the operation $free$ in the case of a
list with at least two different elements. We also illustrate its
application on a list of length two.

%We give the rule for the general case (list of size $>$ 1) below 
%together with an example of a rewrite step for a list of size 2.

\begin{center}
\begin{tabular}{|c|c|}
    \hline
    \multicolumn{2}{|c|}{$\tau : 5 \mapsto 2, 3 \mapsto 2,i \mapsto i \mbox{ for } i  \in \{1,2,4\} $} \\ \hline 
    \multicolumn{2}{|c|}{$\sigma : 4 \mapsto 4$} \\ \hline
 $\xymatrix@C=1pc@R=1pc{1:free \ar[d] \ar[r] & 
             3:cons \ar[d] \ar[r] & 4 : \bullet \\
             2:\mbox{null} & 5:\bullet 
   }$ 
 & $\xymatrix@C=1pc@R=1pc{ 1:free \ar[d] \ar[r] & 4:\bullet\\
               2:\mbox{null}
   }$ 
\\ \hline 
   $\xymatrix@C=1pc@R=1pc{
    0:h \ar[r] \ar[drr] \ar[drrr] & 1:free \ar[d] \ar[r] 
             & 3:cons \ar[d] \ar[r] 
             & 4 : cons \ar[d] \ar@/^-.5pc/[l] \\
             & 2:\mbox{null} & 5:a & 6:b 
   }$
 & $\xymatrix@C=1pc@R=1pc{0:h\ar[r] \ar[dr] \ar[drr] & 
                                       1:free \ar[d] \ar[r] 
                                     & 4:cons \ar[d] \ar[dl] \\
               & 2:\mbox{null} & 6:b       
   }$ \\ \hline
\end{tabular}
\end{center}
Notice that pointers incident to nodes $3$ and $5$
are redirected towards $2$. % which acts like a lightning rod.

There are two cases for lists with one element. The following rule
specifies the case where the last element of the list is obtained
after freeing other elements of the list.  We illustrate the rewrite
rule on the graph obtained earlier (up to renaming of nodes).

%Notice that because of injectivity condition on the matching, these
%two rules cannot match the same graph.

%In this rule only the node labeled by
%$null$ is obtained as a result in the right-hand side.  

\begin{center}
 \begin{tabular}{|c|c|}
    \hline
    \multicolumn{2}{|c|}{$\tau : i \mapsto 2 \mbox{ for } i  \in \{1,2,3,4\}$} \\ \hline 
    \multicolumn{2}{|c|}{$\sigma : $} \\ \hline
  $\xymatrix@C=1pc@R=1pc{ 1:free \ar[d] \ar[r] & 3:cons \ar[dl] \ar[d]  \\
             2:\mbox{null} & 4:\bullet
     }$ 
   & $\xymatrix@C=1pc@R=1pc{ 2:\mbox{null} }$  \\ \hline
  $\xymatrix@C=1pc@R=1pc{0:h\ar[r] \ar[dr] \ar[drr] 
                                     & 1:free \ar[d] \ar[r] 
                                     & 3 :cons \ar[d] \ar[dl] \\
               & 2:\mbox{null} & 4:b       
   }$ &
   $\xymatrix@C=1pc@R=1pc{0:h \ar@/^-.5pc/[dr] \ar@/^.5pc/[dr] \ar[dr] \\
               & 2:\mbox{null}        
   }$ \\ \hline
 \end{tabular}
\end{center}

Finally, because of the injectivity condition on matching, we have 
to consider the special case of lists of size one. This is done by 
the following rule:

\begin{center}
 \begin{tabular}{|c|c|}
    \hline
    \multicolumn{2}{|c|}{$\tau : i \mapsto 2 \mbox{ for } i  \in \{1,2,3,4\}$} \\ \hline 
    \multicolumn{2}{|c|}{$\sigma : $} \\ \hline
    $\xymatrix@C=1pc@R=1pc{1:free \ar[r] \ar[d]
                           & 3:cons \ar[d] \ar@(dr,ur)[]\\
                           2:\mbox{null} & 4:\bullet}$ & 
    $\xymatrix@C=1pc@R=1pc{2:\mbox{null}}$\\ \hline

 \end{tabular}
\end{center}

%\end{example}

% conclusion...
\section{Conclusion}
\label{sec:conc}

% what we have done

We have proposed a new way to define termgraph rewrite rules. Rules
are quite simple. A rule is a tuple $(L,R,\tau, \sigma)$ where $L$ and
$R$ are termgraphs representing the left-hand and the right-hand sides
of the rule, $\tau$ is a mapping from the nodes of $L$ to those of $R$
and $\sigma$ is a partial function from nodes of $R$ to nodes of $L$. $\tau$
describes how incident edges of the nodes in $L$ are connected in
$R$. It should be noted that $\tau$ is not required to be a graph
morphism as in the classical algebraic graph transformation approaches
\cite{CorradiniMREHL97,EhrigHKLRWC97}. As for $\sigma$, it is useful
only when one needs to clone some parts of $L$. We defined rewrite steps as
 pushouts in an appropriate category as shown in section~\ref{sec:clo}.

The proposed rewrite systems offer the possibility
to transform  cyclic termgraphs either by performing local edge redirections
or global edge redirections, as defined  in
\cite{DuvalEP07} following a DPO approach, but provides also new features not present in
\cite{DuvalEP07} such as cloning or deletion of nodes.

%Cloning is also one of the features of the sesqui-pushout approach to
%graph transformation \cite{CorradiniHHK06}. This approach coincides
%with the DPO on left linear rules ... 

%comparaison DPO, SPO, sesqui, duvaletal, cloning et oubli de noeuds,
%graph programming plump, survey plump, ariola

Besides the algorithmic approaches to termgraph transformation
(e.g. \cite{BVG87}), a categorical framework dedicated to
\emph{cyclic} termgraph transformation could be found in
\cite{CorradiniG97} where the authors propose, following
\cite{Power89}, a 2-categorical presentation of termgraph
rewriting. They almost succeeded to represent the full operational
view of termgraph rewriting as defined in \cite{BVG87}, but differ on
rewriting circular redexes. For example, the application of the
rewrite rule $f(x) \to x$ on the termgraph $n:f(n)$ yields the same
termgraph (i.e. $n:f(n)$) according to \cite{BVG87} but yields an
unlabeled node, say $p:\bullet$, according to \cite{CorradiniG97}.  The
definition of rewrite rules that we propose in this paper allows us to
make a clear distinction between the two behaviours.  The rule
$(n:f(m:\bullet), p:\bullet, \tau =\{n \mapsto p, m \mapsto p \},
\sigma = \{\})$ behaves as in \cite{CorradiniG97} when applied on
$n:f(n)$, whereas the behaviour described in
\cite{BVG87} can be obtained by simply declaring that node $p$ is a
clone of node $m$ via $\sigma$ as in the following rule
$(n:f(m:\bullet), p:\bullet, \tau =\{n \mapsto p, m \mapsto p \},
\sigma = \{p \mapsto m \})$.

Future works include the generalization of the proposed systems to
other graphs less constrained than termgraphs.  This would allow us
to require from $\tau$, in a rule $(L,R,\tau,\sigma)$, to be a partial
function like in the single pushout approach \cite{EhrigHKLRWC97}.

%futur work
%\bibliographystyle{abbrv}
%\bibliography{biblio}

\end{document}